\documentclass[a4paper,USenglish,cleveref,autoref,thm-restate]{lipics-v2021}

\hideLIPIcs  

\title{Dimension Reduction for Curves: Simplified and Generalized}

\author{Matthijs Ebbens}{University of Cologne, Germany}{ymebbens@gmail.com}{}{Funded by the Deutsche Forschungsgemeinschaft
(DFG, German Research Foundation) -- Project Number 459420781.}

\author{Jie Lu}{University of Cologne, Germany}{jie.lu.duesseldorf@gmail.com}{}{}

\author{Alexander Munteanu}{TU Dortmund, Germany}{alexander.munteanu@tu-dortmund.de}{}{Funded by the Deutsche Forschungsgemeinschaft
(DFG, German Research Foundation) -- Project Number 535889065.}

\authorrunning{M. Ebbens, J. Lu and A. Munteanu} 

\Copyright{Matthijs Ebbens, Jie Lu and Alexander Munteanu} 

\ccsdesc[500]{Theory of computation~Computational geometry}
\ccsdesc[500]{Theory of computation~Random projections and metric embeddings}

\keywords{dimension reduction, Fréchet distance, dynamic time warping, polygonal curves, piecewise linear surfaces} 

\category{} 

\relatedversion{} 

\nolinenumbers 

\EventEditors{Philip Bille, Seth Pettie, and Sabine Storandt}
\EventNoEds{3}
\EventLongTitle{34th Annual European Symposium on Algorithms (ESA 2026)}
\EventShortTitle{ESA 2026}
\EventAcronym{ESA}
\EventYear{2026}
\EventDate{August 31--September 4, 2026}
\EventLocation{L'Aquila, Italy}
\EventLogo{}
\SeriesVolume{388}
\ArticleNo{114}

\usepackage{mathtools}
\usepackage{amsmath,amsthm,amssymb}
\usepackage{xspace}

\newcommand{\R}{\mathbb{R}}

\newcommand{\mcT}{\mathcal{T}}

\newcommand{\eps}{\varepsilon}
\renewcommand{\epsilon}{\varepsilon}

\newcommand{\Frechet}{Fr\'echet\xspace}

\newcommand{\cF}{\operatorname{cF}}

\DeclareMathOperator*{\argmax}{arg\,max}

\newcommand{\REAL}{\ensuremath{\mathbb{R}}}

\allowdisplaybreaks

\begin{document}

\maketitle

\begin{abstract}
We revisit random projections for reducing the dimension of high-dimensional polygonal curves. Drawing from the toolbox of randomized linear algebra, we give a considerably simplified proof of the known $O(\varepsilon^{-2}\log(nm))$ bound on the target dimension of a random projection that preserves the continuous Fr\'echet distance of polygonal curves up to a factor $(1\pm\varepsilon)$.
Our proof is based on the concept of sparse oblivious subspace embeddings. While previous techniques were limited to the case of the Fr\'echet distance, our techniques are fairly general and extend to all possible distance measures that involve the maximum, a sum or an integral over Euclidean distances between pairs of points on both input curves. We define a generalized dissimilarity measure for curves that includes several popular measures such as Fr\'echet, $q$-DTW, Hausdorff, etc. as special cases and show that the same dimension reduction technique works for this generalized dissimilarity measure. Finally, we apply the same framework for dimension reduction to piecewise linear surfaces, after extending the distance measure suitably to such surfaces. 
\end{abstract}

\clearpage
\section{Introduction}
\label{sec:intro}

In this paper, we study dimension reduction for polygonal curves and build up a fairly general framework, which is widely independent of the dissimilarity measure that is used to compare them. Thus, in contrast to metric embeddings, our framework aims to embed \emph{the curves} into lower dimensional space, not a specific dissimilarity measure. We first motivate our work from a broader perspective before we return to polygonal curves and surfaces.

Dimension reduction is one of the most fundamental tasks in computational geometry and has many beneficial consequences. For instance, algorithms for a computational task may have a dependence on the dimension of their input objects, such as points, shapes, or polygonal curves. Reducing the dimension of these objects, say from $d$ to some $t\ll d$, allows to apply the same algorithm on the reduced data, but with the dependence on $d$ replaced by the significantly smaller value $t$. This comes at the cost of a small approximation error which is typically bounded within a factor of $(1\pm\eps)$. Another area where dimension reduction plays a crucial role is the construction of so-called \emph{coresets}~\cite{Badoiu02,AgarwalHV05}. Coresets aim to reduce the number of data objects rather than their dimension. Classic coreset constructions typically had a small, though polynomial dependence on $d$, depending on the problem for which they were developed. Sometimes exponential dependencies on the dimension were present or even unavoidable~\cite{AgarwalHV04,Chan04,MunteanuOP26}. More recently, for some problems such as $k$-means/median clustering, the dependence of the coreset size on the dimension could be completely eliminated~\cite{SohlerW18,FeldmanSS20}, which would not have been imaginable without (random) projection techniques such as the celebrated Johnson-Lindenstrauss (JL) Lemma~\cite{JL84}, PCA~\cite{KannanV09}, and their descendants. Extending these techniques to polygonal curves and surfaces is an intriguing open direction, as even the best coresets~\cite{Cohen-AddadD0SS25,ConradiKPR24} in this regime suffered from linear dimension dependence.

JL-type embeddings are key to dimension-reduced results in other areas as well: for instance in compressed sensing, one receives the promise that a $d$-dimensional dense and noisy signal is inherently $k$-sparse and the task is to reconstruct that signal with as few (random) measurements as possible. Here, the number of measurements corresponds to a reduced target dimension which is only $\Theta(k\log(d/k))$~\cite{Donoho06,CandesRT06,BaraniukDDW08}.

Dimension reduction tools have recently brought up dimension-free bounds for learning problems, where we would like to approximate some problem-specific 
quantity defined as the expectation taken over an unknown distribution of points or polygonal curves
~\cite{BucarelliLST23,Krivosija25}. The task is to sample a small (dimension-independent) number of data objects from the distribution such that the sample mean over the quantity is within a small error from its expectation. In contrast to coresets, the ground set is not necessarily finite and can only be accessed via i.i.d. samples from the unknown distribution, which complicates the problem and requires some natural assumptions to succeed~\cite{AlishahiP24,AlishahiMOP26}.

As a first step towards opening these avenues for problems on polygonal curves and surfaces, we propose a fairly general framework for embedding these objects in lower dimensional space, while preserving their `distance' as measured by popular (continuous) metrics, dissimilarity measures and divergences such as the \Frechet metric, dynamic time warping (DTW), Hausdorff distance, and their variants. In the following, we present our main results, techniques and related work, then prove our technical claims before concluding with a summary, certain limitations and possible extensions of our work.

\subsection{Results}\label{sec:results}
We present our main result, which is a general framework for dimension reduction of parameterized piecewise $\gamma$-dimensional linear surfaces. Notably, this includes polygonal curves as special case for $\gamma=1$.
For the notation and definitions of terms such as traversals, we refer to \Cref{sec:prelim}. For a more detailed explanation of the Theorem, we refer to \Cref{sec:surfaces}.
\begin{restatable}{theorem}{dimredsurfaces}
    \label{thm:dimredsurfaces}
    Let $\sigma_1,\ldots,\sigma_n:[0,1]^\gamma\rightarrow\R^d$ be $n$ piecewise linear surfaces with $m$ pieces. Let $\mcT$ be a set of admissible traversals between surfaces and let $\mu$ be a measure on $[0,1]^\gamma$. There exists a random linear mapping $f:\R^d\rightarrow\R^t$ with $t\in O(\varepsilon^{-2}(\gamma + \log(mn/\delta)))$ such that, with probability at least $1-\delta$, it holds for all $1\leq i< j\leq n$ that
    \[
        (1-\varepsilon)\,d(\sigma_i,\sigma_j)\leq d(f(\sigma_i),f(\sigma_j))\leq (1+\varepsilon)\,d(\sigma_i,\sigma_j),
    \]
    where the distance $d$ between surfaces $\sigma_i$ and $\sigma_j$ is defined as
    \[
        d(\sigma_i,\sigma_j)= \inf_{(\alpha,\beta)\in\mcT}\left(\int_{[0,1]^\gamma}\|\sigma_i(\alpha(r))-\sigma_j(\beta(r))\|^q_2 \,d\mu(r)\right)^{1/q}. 
    \]
\end{restatable}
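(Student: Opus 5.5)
The approach is to reduce the statement to a \emph{pointwise} guarantee. I will build a linear map $f$ that preserves, up to a factor $(1\pm\varepsilon)$, the Euclidean norm of \emph{every} difference vector $\sigma_i(x)-\sigma_j(y)$, simultaneously for all $x,y\in[0,1]^\gamma$ and all pairs $i<j$. Once this holds, the surface distance is preserved for free. For any fixed traversal $(\alpha,\beta)\in\mcT$, the integrand satisfies pointwise
\[
(1-\varepsilon)^q\|\sigma_i(\alpha(r))-\sigma_j(\beta(r))\|_2^q \le \|f(\sigma_i(\alpha(r)))-f(\sigma_j(\beta(r)))\|_2^q \le (1+\varepsilon)^q\|\sigma_i(\alpha(r))-\sigma_j(\beta(r))\|_2^q,
\]
using linearity, $f(\sigma_i)(x)=f(\sigma_i(x))$. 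Integrating against $\mu$ and taking $1/q$-th powers preserves the factors $(1\pm\varepsilon)$. Taking the infimum over $\mcT$ on both sides, which is the same set for original and projected surfaces, then gives the claim. None of this uses any property of $\mcT$, $\mu$ or $q$, which is why the framework is so general.

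The core step is to control all difference vectors with a subspace embedding. Fix a pair $(i,j)$ and a piece $P$ of $\sigma_i$ and a piece $Q$ of $\sigma_j$. On $P$ the map $\sigma_i$ is affine with $\gamma$-dimensional linear part, and likewise $\sigma_j$ on $Q$. Hence every difference $\sigma_i(x)-\sigma_j(y)$ with $x\in P$, $y\in Q$ lies in a fixed linear subspace $V_{P,Q}$ of dimension at most $2\gamma+1$. This subspace is spanned by the two linear parts and the difference of the offsets. A subspace embedding for $V_{P,Q}$ preserves the norm of every vector in it, so it handles the continuum of point pairs at once. The number of such subspaces is at most $\binom{n}{2}m^2\le (nm)^2$.

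I will invoke an oblivious subspace embedding: a random $t\times d$ matrix, either Gaussian/JL or the sparse OSE the abstract alludes to. Its standard guarantee is that for a fixed $k$-dimensional subspace it is a $(1\pm\varepsilon)$-embedding with probability $1-\delta'$ whenever $t\in O(\varepsilon^{-2}(k+\log(1/\delta')))$. I set $k=2\gamma+1$ and $\delta'=\delta/(nm)^2$, and take a union bound over all subspaces $V_{P,Q}$. This gives $t\in O(\varepsilon^{-2}(\gamma+\log(nm/\delta)))$, matching the statement. The proof of the OSE bound itself is a net argument: take an $\varepsilon$-net of size $e^{O(k)}$ on the unit sphere of the subspace and apply JL to each net point.

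The only delicate point is the bookkeeping of pieces. Traversals $\alpha,\beta$ may map $r$ into arbitrary pieces and need not be piecewise nice, so the argument must not rely on their structure. The plan handles this by covering \emph{all} pairs of pieces, which makes the pointwise guarantee hold for every $(x,y)$ regardless of how the traversal moves. Boundary points lie in several pieces and are covered by any of them.

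I expect the main obstacle to be citing the OSE bound with additive rather than multiplicative dependence on $k$ and $\log(1/\delta)$. For dense Gaussian maps this is classical. If the sparse variant is wanted, I would fall back on the known sparse OSE results, accepting a possibly weaker sparsity in exchange for the stated $t$.
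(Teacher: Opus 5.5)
Your proposal is correct and follows essentially the same route as the paper. You obtain a pointwise norm guarantee from a union bound of fixed-dimension subspace embeddings (\Cref{thm:SE}) over the $\binom{n}{2}m^2$ pairs of pieces; you use the $(2\gamma+1)$-dimensional span of the two direction spaces plus the offset difference, where the paper uses the span of $2(\gamma+1)$ vertices. You then carry the pointwise bound through the integral and the infimum over $\mcT$, and your ``fix any traversal, then take the infimum'' step is slightly cleaner than the paper's sequences of homeomorphisms. Note that ``sparse'' in the abstract refers to $k$-sparse coefficient vectors, not sparse sketching matrices, so a dense Gaussian map already suffices and your final worry is moot.
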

The distance is highly general and includes the most popular continuous dissimilarity measures as special cases: \Frechet (for $q\rightarrow\infty$), DTW (for $q=1$), and its variants such as weak \Frechet, $q$-DTW for $q\in [0,\infty)$, etc. 
While Hausdorff distance is not directly subsumed by the generalized dissimilarity measure, an extension is immediate from our results.

Discrete variants are also included by choosing the measure $\mu$ of integration to be a counting measure.
Discrete traversals for polygonal curves are well-known, but no such notion is known for surfaces; in particular there is no natural ordering of the vertices. We thus propose a set-valued generalization of discrete traversals to piecewise linear surfaces and prove that our generalization is consistent with the standard definition in the case of polygonal curves.

\subsection{Related work}\label{sec:relatedwork}
\subparagraph{Random projections for curves under \Frechet distance}
Early work \cite{DriemelK18} studied probabilistic embeddings of the \Frechet distance between pairs of polygonal curves of complexity bounded by at most $m$ onto random lines, resulting in large multiplicative distortions of order $\Theta(m)$.
Allowing larger embedding dimensions, we note that preserving the \emph{discrete} \Frechet and DTW distances of $n$ curves is straightforward from the JL lemma: it suffices to preserve all $m^2$ pairwise distances of vertices between $\binom{n}{2}$ pairs of curves, because every discrete traversal that we consider in a maximization or sum, consists of a subset of these distances. The embedding dimension is thus $O(\eps^{-2}\log(nm))$.

Maybe surprisingly, the same embedding dimension suffices to preserve the continuous case as well. A first step was made in \cite{MeintrupMR19} who achieved mixed relative and additive errors using the same embedding of all vertices. A relative $(1\pm\eps)$ error guarantee was proven more recently in \cite{PsarrosR23} using a complicated argument based on logical predicates to decide whether the \Frechet distance is within some value $\delta$. These predicates check the aforementioned pairwise distances between vertices and additionally include distances to a couple of anchor points along edges to avoid certain complications with very long edges that caused additive errors in the previous analysis of~\cite{MeintrupMR19}. The set of all vertices and anchor points builds a discrete `skeleton' for the \Frechet distance. The entire set of skeleton points is added to the embedding and since their number remains polynomially bounded in both $n$ and $m$, this does not increase the asymptotical $O(\eps^{-2}\log(nm))$ target dimension of the embedding. The journal version~\cite[Section 1.2.1]{PsarrosR25} included an informal note on oblivious subspace embeddings similar to our work. They did not provide a proof and only considered the \Frechet distance, whereas our techniques extend and generalize to other distance measures and to surfaces.

\subparagraph{Oblivious subspace embeddings and sparsity}
The finite set embedding guarantee of the celebrated JL lemma \cite{JL84} was first extended to preserve $k$-dimensional linear subspaces in the compressed sensing literature \cite{Donoho06,CandesRT06,BaraniukDDW08}, where we get the promise that some signal is $k$-sparse. We note that this only requires an embedding of axis aligned subspaces, whose bases are subsets comprising $k$ elements of the standard basis in $\R^d$. This is often referred to as the restricted isometry property (RIP). The next step towards arbitrary $k$-dimensional subspaces represented in arbitrary bases\footnote{While an explicit basis is useful for conducting formal proofs, the embeddings are \emph{oblivious} and do not require an explicit representation to work; the existence of a subspace is sufficient to embed it.} was made by Sarl\'os \cite{Sarlos06}, who paved the way for the fastest approximation algorithms for numerical linear algebra \cite{ClarksonW09,Woodruff14} with a plethora of applications in Machine Learning, cf.~\cite{DrineasMMW12,Munteanu22}. Finally, the step back to $k$-dimensional subspaces with a $k$-sparse support in an arbitrary, not necessarily axis-aligned basis, was recently studied in the context of sparse regression \cite{MaiMM0SW23}. Our contributions show how to exploit these works for dimension reduction for polygonal curves and piecewise linear surfaces.

\subsection{Technical overview}\label{sec:techniques}
\subparagraph{Warm up: dimension reduction for the \Frechet distance}
Our investigations start with the continuous \Frechet distance as a warm up. We follow a different approach than recent work \cite{PsarrosR23,PsarrosR25} detailed in \Cref{sec:relatedwork} on related work. Instead of discretizing the polygonal curves directly to preserve the \Frechet distance and applying the mapping $f$ asserted by the JL lemma to the discrete set, we interpret the Euclidean distances between \emph{any} points $p,q$ on either curve as norms of their connecting vectors $v=p-q$. Our problem reduces to preserving the norm of all possible vectors $v$. Now, the multiplicative approximation guarantee that we get from the JL lemma is
\[
    (1-\eps) \|v\|_2 \leq \|f(v)\|_2 \leq (1+\eps) \|v\|_2\,
\]
which is equivalent to a ($1$-dimensional) so-called subspace embedding, which guarantees that
\[
    \forall x\in \R \colon (1-\eps) \|vx\|_2 \leq \|f(vx)\|_2 \leq (1+\eps) \|vx\|_2\,.
\]
The latter guarantee can be generalized to preserve arbitrary $k$-dimensional subspaces based on the JL lemma with embedding dimension $t\in O(\eps^{-2}(k+\log(1/\delta)))$ where $\delta$ denotes the failure probability~\cite{Sarlos06,Woodruff14}. Let $V\in\R^{d\times k}$ be a matrix whose columns build a basis for a $k$ dimensional linear subspace of $\R^d$. Then choosing a properly rescaled i.i.d. Gaussian matrix $S\in\R^{t\times d}$ yields the guarantee that with probability at least $1-\delta$ it satisfies~\cite{Sarlos06,Woodruff14}
\[
    \forall x\in \R^k \colon (1-\eps) \|Vx\|_2 \leq \|SVx\|_2 \leq (1+\eps) \|Vx\|_2\,.
\]
Now, we reduce the task of preserving the \Frechet distance via Gaussian matrices to preserving a \emph{collection} of $k$-dimensional subspace embeddings for some $k\in O(1)$ as follows. We consider two polygonal curves $\pi$ and $\tau$ in $\R^d$. We observe that any pair of points $p\in \pi$ and $q \in \tau$, not necessarily vertices, can be expressed as a convex combination of their two neighboring vertices. The vector $v = p-q$ may thus be expressed as a linear combination of $4$ vertices. In fact \emph{any} vector determining the \Frechet distance lies in some $4$-dimensional linear subspace. Preserving the collection of all such `$4$-sparse' subspaces will thus suffice.

Now, consider a set of $n$ curves with vertices in $\R^d$ of complexity at most $m$. We let $A\in\R^{d \times nm}$ be the matrix whose columns comprise all up to $nm$ vertices of all input curves. Then any norm we are interested in, takes the form $\|Ax\|_2$, where we get the promise that at most $k=4$ many coordinates of $x$ are non-zero at a time. We can now construct a $k$-sparse subspace embedding~\cite{MaiMM0SW23} that satisfies
\[
    \forall x\in \Psi_k \colon (1-\eps) \|Ax\|_2 \leq \|SAx\|_2 \leq (1+\eps) \|Ax\|_2\,,
\]
where $\Psi_k = \{x \in \R^{nm} \mid \|x\|_0 \leq k\}$.
Recall that one single $k$-dimensional subspace embedding can be achieved with target embedding dimension $O(\eps^{-2}(k+\log(1/\delta)))$. A simple combinatorial calculation yields that there exist at most $\binom{nm}{k}\leq (\frac{enm}{k})^k$ choices. Using the union bound it thus suffices to choose a target dimension of $t\in O(\eps^{-2}k\log(nm/(\delta k)))\subseteq O(\eps^{-2}\log(nm/\delta))$, since $k=4$ is an absolute constant.

Finally, by an exchange argument between the optimal traversals w.r.t. the original and w.r.t the embedded curves, and using the subspace embedding to switch between original and low-dimensional curves, we show that the \Frechet distance is preserved up to $(1\pm\eps)$ factors.

\subparagraph{Extension beyond \Frechet}
We observe that the dimension reduction and the final exchange argument are much more general. Indeed, preserving an infimum over integrals for the continuous DTW in the same way seems even more direct than preserving the inf/sup combination for the \Frechet distance. Regardless of the difficulty of the exchange argument, by embedding the collection of $4$-sparse subspaces, we have actually embedded all Euclidean distances that can ever appear in a plethora of dissimilarity measures that are variants of \Frechet and DTW. Even others are included, such as Hausdorff distance, which has a different formal definition but is again based on a vector space equipped with the Euclidean norm.

\subparagraph{Piecewise linear surfaces}
Our techniques can be extended to handle piecewise linear surfaces of dimension $\gamma$, where polygonal curves constitute the special case $\gamma=1$. Our argument works as follows. Consider two piecewise linear surfaces. Let $p,q$ be points on some linear piece of either of the two surfaces. By Carath\'eodory's theorem there exists a convex combination of vertices $v_i$ such that $p=\sum_{i=1}^{\gamma+1} \lambda_i v_i$. The same holds for $q$, so the difference vector $p-q$ can be expressed as a linear combination of $k=2(\gamma+1)$ vertices. To preserve the norms of all such linear combinations, we can again apply the $k$-sparse subspace embedding of~\cite{MaiMM0SW23} with target embedding dimension $O(\eps^{-2}(\gamma\log(nm/(\delta \gamma))))$. In our technical proofs in \Cref{sec:surfaces}, we exploit more geometric structure of our problem to reduce this to $O(\eps^{-2}(\gamma+\log(nm/\delta)))$.

Our generalized dissimilarity measure again serves as a general framework for the surface variants of continuous $q$-DTW, including the limiting cases $q=1$ (continuous DTW), and $q\rightarrow \infty$ (\Frechet). Hausdorff distance can also be handled in a similar way as for curves. Continuous reparameterizations between two surfaces can be handled using homeomorphisms.

\subparagraph{Discrete dissimilarity measures}
Dimension reduction for discrete dissimilarity measures for curves is folklore, and can be handled by a straightforward application of the JL lemma to all pairwise distances between vertices of two curves. It is subsumed in our framework, and can be implemented by using a counting measure instead of a continuous measure $\mu$ for integration. For surfaces, however, it is unclear how a discrete traversal should be defined. We thus handle this issue by taking a detour over a continuous homeomorphism $h$, mapping vertices to one another whenever $h$ or $h^{-1}$ maps a vertex to the Voronoi cell of the other. 
For a fixed choice $h\colon A \rightarrow B$, we consider the Voronoi cells $V(a_i), V(b_i)$ of the vertices $a_1,\ldots,a_m$ respectively $b_1,\ldots,b_m$ on each surface. Now, whenever $h(a_i)\in V(b_j)$, we add the pair $(a_i,b_j)\in T$ without duplicates. Similarly, we also consider the inverse homeomorphism and add $(a_j,b_i)\in T$ without duplicates whenever $h^{-1}(b_i)\in V(a_j)$.
We prove that a discrete traversal constructed in this way is consistent with the standard definition of discrete traversals for polygonal curves. Ranging over all homeomorphisms, we thus enumerate the set of all possible discrete traversals $T\in\mathcal T$ between surfaces or curves.

\section{Preliminaries}\label{sec:prelim}
\subparagraph{Notation}
We set $[n]\coloneqq \{1,\ldots,n\}$. For a vector $v\in\R^d$, and $p\in(0,\infty)$, we denote its $\ell_p$-norm by $\|v\|_p = \left(\sum_{i=1}^d |v_i|^p \right)^{1/p}$. We remark that the cases $0 < p < 1$ only define quasi-norms. Other special cases are $\|v\|_\infty\coloneqq\underset{p\rightarrow\infty}{\lim} \|v\|_p = \max_{i\in[d]}|v_i|$, and $\|v\|_0 = |\{ i\in [d]\mid v_i \neq 0\}|$. We denote by $\Psi_k \coloneqq \{x\in \R^d\mid \|x\|_0 \leq k\}$ the set of $k$-sparse vectors in $\R^d$.

\subparagraph{Polygonal curves and surfaces}
A \emph{curve} in Euclidean space $\R^d$, is a continuous function $\sigma:[0,1]\rightarrow \REAL^d$. A \emph{polygonal curve} is a curve such that there exist a finite number $m\in \mathbb{N}$ of values $0=s_1\leq s_2\leq \ldots \leq s_m=1$, with $v_i=\sigma(s_i)$ which we call \emph{vertices}. We say that $\sigma$ has \emph{complexity} $m$. Further, for each $i\in\lbrace 1,\ldots,m-1\rbrace$ the two consecutive vertices $\sigma(s_i)$ and $\sigma(s_{i+1})$ are connected by an affine straight-line segment, i.e., it holds that
\[
\sigma(s_i+\varsigma)= \left( 1-\frac{\varsigma}{s_{i+1}-s_i}\right) \cdot \sigma(s_i) + \frac{\varsigma}{s_{i+1}-s_i} \cdot \sigma(s_{i+1})\,,
\]
for all $\varsigma\in [0, s_{i+1}-s_i]$.
For simplicity of notation, we can fully characterize a polygonal curve $\sigma$ by defining the sequence of its vertices $\sigma=(v_1,\ldots, v_m)$.

Analogously, a piecewise linear \emph{surface} in $\R^d$ is a continuous function $\sigma:[0,1]^\gamma \rightarrow \REAL^d$ whose image consist of $m$ linear pieces which are polygons/polyhedrons bounded by a total of $N$ vertices $v_i\in\R^d$, such that every piece lies in a $\gamma$-dimensional affine subspace of $\R^d$.

\subparagraph{Traversals and dissimilarity measures}
We work with discrete and continuous `transformations'.
One way to define them is the notion of \emph{traversals}.
Let $\sigma=( u_1,\ldots,u_{m'})$ and $\tau=( v_1,\ldots,v_{m''} )$ be two curves of complexities $m'$ and $m''$ respectively.
Traversals are usually defined by inductively constructed sequences of point pairs $(p,q)$ such that $p\in \sigma$ and $q\in \tau$. It will be helpful for the sake of generalization to surfaces, to work with equivalent (unordered) sets of point pairs instead. We prove their equivalence in \Cref{app:traversal_equiv}.

Then, a discrete \emph{traversal} of $\sigma$ and $\tau$ is a set $T$ that consists of pairs of vertices $(p,q)$ with $p\in\sigma$ and $q\in\tau$, such that 

\begin{enumerate}
    \item $(u_1,v_1) \in T \wedge (u_{m'},v_{m''})\in T$,
    \item $\forall i<m', j<m''\colon (u_i,v_j)\in T_h \Rightarrow (u_i,v_{j+1})\in T_h \vee (u_{i+1},v_j)\in T_h\vee (u_{i+1},v_{j+1})\in T_h$,
    \item $(u_i,v_{j+1})\in T_h \Rightarrow (u_{i+1},v_j)\notin T_h$.
\end{enumerate}

We remark that the property of discrete traversals of not being allowed to backtrack, as common in the literature, is encapsulated by the third criterion. However, the equivalence between defining traversals as sequences of point pairs or as sets of point pairs holds even if traversals are allowed to backtrack: in this case, the set definition can be easily adapted. The notion of discrete traversals can be extended to the continuous case by specifying pairs of `reparameterizations', which are functions $\alpha,\beta\colon [0,1]^\gamma \rightarrow [0,1]^\gamma$. Then we let the induced traversal be the set $T=\{(\sigma(\alpha(r)),\tau(\beta(r))) \mid r\in [0,1]^\gamma \}$. Sometimes it is appropriate to impose further restrictions on $\alpha,\beta$ such as injectivity, surjectivity, continuity, monotonicity, as required, and denote $\mathcal T\subseteq\{(\alpha,\beta)\mid \alpha,\beta\colon [0,1]^\gamma \rightarrow [0,1]^\gamma\}$ the corresponding subset of \emph{admissible} traversals.

Given two curves $\sigma,\tau$, we now define a general dissimilarity measure $d(\sigma,\tau)$ between surfaces including curves as the special case where $\gamma=1$. We let
\[
    d(\sigma,\tau)\coloneqq \inf_{(\alpha,\beta)\in\mcT}\left(\int_{[0,1]^\gamma}\|\sigma(\alpha(r))-\tau(\beta(r))\|^q_2 \,d\mu(r)\right)^{1/q}\,. 
\]
This dissimilarity measure subsumes a plethora of popular distances. For instance, the \Frechet distance, $d_{\cF}(\sigma,\tau) = \inf_{(\alpha,\beta)}\max_{r\in[0,1]}\|\sigma(\alpha(r))-\tau(\beta(r))\|_2$, is implemented by the following choices: we take the limit of $q\rightarrow \infty$, in which case the $\mathcal{L}^q$ integration becomes the supremum, and in case of compact shapes, even simply the maximum. A uniform measure $\mu$ over $[0,1]^\gamma$ is chosen. For curves, the set of admissible traversals are restricted to be pairs of continuous, non-decreasing surjections (in which case the infimum is attained). An alternative that generalizes to surfaces is to take homeomorphisms. The infimum is then attained in the limit of a sequence of homeomorphisms.

For continuous DTW, we set $q=1$.
The other items are similar to the \Frechet distance except that for technical reasons, we would need to extend the interval $[0,1]$ to $[0,p+q]$, where $p,q$ equal the arc-lengths of $\sigma,\tau$ respectively, see \cite{BuchinNW22} for details. We stress that these technical details do not matter for our dimension reduction.
Discrete variants can be realized by choosing a counting measure for $\mu$, which transforms the integral to a finite sum.

\subparagraph{Subspace embeddings} The notion of $\eps$-subspace embeddings \cite{Sarlos06} is central to our work. The following version is due to \cite{Woodruff14} and is slightly modified to fit our notation and setting.
\begin{theorem}[$\eps$-Subspace Embedding, Theorem 2.3 of \cite{Woodruff14}]
\label{thm:SE}
	Let $A\in \R^{d \times k}$. 
    There exists a distribution over random matrices $S\in\R^{t\times d}$ with $t\in O(\eps^{-2} (k + \log(1/\delta)))$
    such that it holds with probability $1-\delta$ that
	\[
        \forall x\in\R^d: (1-\eps) \|Ax\|_2 \leq \|SAx\|_2 \leq (1+\eps) \|Ax\|_2 .
    \]
\end{theorem}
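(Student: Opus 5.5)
Since the statement concerns $\|Ax\|_2$ for $A\in\R^{d\times k}$, the quantifier should read $\forall x\in\R^k$. I would take $S$ to be a rescaled i.i.d. Gaussian matrix, $S=\frac{1}{\sqrt t}G$ with $G_{ij}\sim N(0,1)$ independent. The plan is to reduce the statement to norm preservation on the unit sphere of a fixed low-dimensional subspace, prove it on a finite net via the distributional JL property plus a union bound, and then lift it from the net to the whole sphere.

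\textbf{Reduction.} Let $U\in\R^{d\times r}$, $r\le k$, have orthonormal columns spanning the column space of $A$. Every vector $Ax$ equals $Uy$ for some $y\in\R^r$, and $\|Uy\|_2=\|y\|_2$. By homogeneity it therefore suffices to show that, with probability $1-\delta$,
\[
\bigl|\,\|SUy\|_2^2-1\,\bigr|\le\eps \quad\text{for all } y\in\R^r \text{ with } \|y\|_2=1 .
\]
This suffices because $1-\eps\le\sqrt{1-\eps}$ and $\sqrt{1+\eps}\le 1+\eps$. Equivalently, setting $M\coloneqq U^\top S^\top S U - I_r$, a symmetric $r\times r$ matrix, we need $\|M\|_{2}\le\eps$.

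\textbf{Net and union bound.} For a fixed unit vector $u\in\R^d$, the vector $\sqrt t\,Su$ is a standard Gaussian in $\R^t$, so $t\|Su\|_2^2$ is $\chi^2_t$-distributed. Standard $\chi^2$ tail bounds (the distributional JL lemma) then give
\[
\Pr\bigl[\,|\|Su\|_2^2-1|>\eta\,\bigr]\le 2e^{-c\eta^2 t}
\]
for an absolute constant $c>0$. Next, fix a $\theta$-net $\mathcal N$ of the unit sphere of $\R^r$ with $\theta=1/4$; a volume argument gives $|\mathcal N|\le (1+2/\theta)^r\le 9^k$. Apply the tail bound to each $u=Uz$ with $z\in\mathcal N$, using $\eta=\eps/2$, and take a union bound. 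The failure probability is at most $2\cdot 9^k e^{-c\eps^2 t/4}$, which is at most $\delta$ once $t\ge C\eps^{-2}(k+\log(1/\delta))$ for a suitable absolute constant $C$. On the complementary event we have $|z^\top M z|\le \eps/2$ for all $z\in\mathcal N$.

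\textbf{Lifting from the net to the sphere.} I expect this to be the only delicate step. Because $M$ is symmetric, $\|M\|_2=\sup_{\|y\|_2=1}|y^\top M y|$. Take a unit $y$ attaining this supremum and write $y=z+w$ with $z\in\mathcal N$ and $\|w\|_2\le\theta$. Then
\[
|y^\top M y|\le |z^\top M z|+2|z^\top M w|+|w^\top M w|\le \tfrac{\eps}{2}+(2\theta+\theta^2)\|M\|_2 .
\]
With $\theta=1/4$ the coefficient is $9/16$, so $\|M\|_2\le \tfrac{16}{7}\cdot\tfrac{\eps}{2}$. This is slightly larger than $\eps$. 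I would fix it either by running the net step with $\eta=\eps/4$, which changes only the constant $C$, or by taking a smaller $\theta$, which changes only the base of the exponent in $|\mathcal N|$. Either way the final guarantee $(1\pm\eps)\|Ax\|_2$ holds with target dimension $t\in O(\eps^{-2}(k+\log(1/\delta)))$, as claimed.
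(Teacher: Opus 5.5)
Your proof is correct, and you are right that the quantifier should range over $x\in\R^k$. The overall structure matches the paper's sketch of Woodruff's Theorem 2.3: pass to the column space, take a net of constant resolution, apply the distributional JL bound with a union bound, and rescale. Where you genuinely differ is the step from the net to the whole sphere. The paper, following Woodruff, uses vector chaining. A unit vector $y$ is expanded as $y=\sum_{i\ge 0}y^i$ with $\|y^i\|_2\le 2^{-i}$ and each $y^i$ a scalar multiple of a point of a $1/2$-net, and $\|Sy\|_2^2=\sum_{i,j}\langle Sy^i,Sy^j\rangle$ is controlled term by term. This requires $S$ to preserve inner products between all pairs of net points (a union bound over $|\mathcal N|^2$ events, still $2^{O(k)}$), followed by a geometric-series summation. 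You instead put the distortion into the symmetric matrix $M=U^\top S^\top SU-I_r$ and use $\|M\|_2=\sup_{\|y\|_2=1}|y^\top My|$ to get the one-step self-bounding inequality $\|M\|_2\le\eta+(2\theta+\theta^2)\|M\|_2$. So you only need norm preservation at the net points themselves and no infinite expansion, which is shorter and fully self-contained. Both routes use the inner-product structure of $\ell_2$ so that the error terms scale with the deviation rather than with $\|SU\|_2$. That is what allows a constant-resolution net and avoids the extra $\log(1/\eps)$ factor an $\eps$-net with the plain triangle inequality would cost. Your argument also makes precise the last sentence of the paper's sketch, which relates each point to a net point at distance $\eps$ even though the net there has constant resolution.
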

We note that the matrix $S$ can be chosen such that every $S_{i,j}\overset{i.i.d.}{\sim}\mathcal N(0,1/t)$. The same result can be achieved by rescaled random sign (Rademacher) matrices, which is more convenient in streaming and other memory constrained settings \cite{ClarksonW09}. 
{A crucial property is that the construction of $S$ is \emph{oblivious} to the matrix $A$. That means we can apply it to preserve subspaces of dimension $k$ for which we do not have an explicit representation or whose basis column vectors are hidden in a superset of columns. The existence of a subspace suffices for the guarantee to apply, cf.~\cite{Woodruff14}.}
The idea behind the subspace embedding construction of \cite{Sarlos06} is that the unit ball in the $k$-dimensional subspace of $\R^d$ spanned by the columns of $A$ can be covered by an $\eps$-net of size $(1+2/\eps)^k$~\cite{Pis99}. By a simple vector chaining argument covered in \cite[Theorem 2.1]{Woodruff14}, the net can be constructed with an absolute constant $\eps_0:=1/2$ instead of $\eps$ to still hold for every $\eps\leq \eps_0$. This yields a net $\mathcal N$ of size $|\mathcal N| = C^k$ for an absolute constant $C$. Applying the JL lemma with a union bound over this net yields a $(1\pm\eps)$ approximation to the norm of every point in the net within target dimension $O(\eps^{-2}\log(|\mathcal N|/\delta))\subseteq O(\eps^{-2} (k \log(C)+\log(1/\delta)))\subseteq O(\eps^{-2} (k + \log(1/\delta)))$. To complete the argument, every point in the ball has a point in the net at distance at most $\eps$ to which it can be related to achieve $(1\pm O(\eps))$ error. By linearity of the embedding and the multiplicative error guarantee, the argument extends from the unit ball to arbitrary radius by scaling.

We will exploit $\eps$-subspace embeddings for \emph{collections} of $k$-dimensional subspaces.
To this end we use the following theorem of \cite{MaiMM0SW23} slightly modified to fit our notation and setting.
\begin{theorem}[$k$-Sparse $\eps$-Subspace Embedding, Theorem 7 of \cite{MaiMM0SW23}]
\label{thm:kSE}
	Let $A\in \R^{d \times N}$.
    There exists a distribution over random matrices $S\in\R^{t\times d}$ with $t\in O(\eps^{-2}k\log(N/(\delta k)))$
    such that it holds with probability $1-\delta$ that
	\[
        \forall x\in\Psi_k: (1-\eps) \|Ax\|_2 \leq \|SAx\|_2 \leq (1+\eps) \|Ax\|_2 .
    \]
\end{theorem}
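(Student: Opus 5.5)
The plan is to derive \Cref{thm:kSE} from \Cref{thm:SE} and a union bound over all supports. Every $x\in\Psi_k$ is supported on some index set $I\subseteq[N]$ with $|I|=k$; we may pad smaller supports up to exactly $k$ indices, since the statement is monotone in the support. For such $x$ we have $Ax=A_Ix_I$, where $A_I\in\R^{d\times k}$ is the column submatrix indexed by $I$. The claim for all of $\Psi_k$ is therefore equivalent to the conjunction, over all $\binom{N}{k}$ index sets $I$, of the event
\[
E_I:\quad \forall y\in\R^k:\ (1-\eps)\|A_Iy\|_2\leq\|SA_Iy\|_2\leq(1+\eps)\|A_Iy\|_2 .
\]

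The key point is that the distribution in \Cref{thm:SE} (e.g.\ i.i.d.\ $\mathcal N(0,1/t)$ entries) is oblivious. A single random $S$, drawn independently of $A$, therefore satisfies $E_I$ for each fixed $I$ with probability at least $1-\delta'$, provided $t\in O(\eps^{-2}(k+\log(1/\delta')))$. We will not use independence between different $I$; we only need each marginal failure probability to be small. We set $\delta'=\delta/\binom{N}{k}$, and the union bound then gives that all $E_I$ hold simultaneously with probability at least $1-\delta$.

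The remaining step is to verify the dimension. Using $\binom{N}{k}\leq(eN/k)^k$, we get $\log(1/\delta')\leq k\log(eN/k)+\log(1/\delta)$. Hence
\[
t\in O\bigl(\eps^{-2}(k+k\log(eN/k)+\log(1/\delta))\bigr)\subseteq O\bigl(\eps^{-2}(k\log(N/k)+k+\log(1/\delta))\bigr).
\]
This is contained in $O(\eps^{-2}k\log(N/(\delta k)))$ under the natural regime $N\geq 2k$ (or $N\ge ek$), where $k\log(N/(\delta k))=k\log(N/k)+k\log(1/\delta)\ge k\log(N/k)+\log(1/\delta)$ and $k\log(N/k)\gtrsim k$. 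We will state this mild assumption explicitly. In the degenerate case $N<2k$ there are only $O(1)^k$ supports, or one can apply \Cref{thm:SE} directly to all of $A$ if $N\leq 2k$. That case yields $O(\eps^{-2}(k+\log(1/\delta)))$, and the stated bound should be read with the usual convention that the logarithm is at least a constant.

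The main obstacle is only this bookkeeping of logarithmic terms and the edge case of small $N$. The probabilistic content is entirely contained in \Cref{thm:SE} together with obliviousness, which is what allows one matrix $S$ to be analysed against every subspace $\mathrm{span}(A_I)$ separately.
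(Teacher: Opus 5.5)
Your proposal is correct and takes the same route as the paper's sketch: a union bound over the $\binom{N}{k}\leq(eN/k)^k$ possible supports, applying the oblivious embedding of \Cref{thm:SE} to each column submatrix $A_I$ with failure probability $\delta/\binom{N}{k}$. Your explicit bookkeeping of the logarithmic terms is more careful than the paper, which simply asserts the resulting bound; this includes the regime $N\geq 2k$ and the convention that the logarithm is bounded below by a constant.
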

The idea is that there are $\binom{N}{k}\leq ({eN}/{k})^k$ choices of $k$ non-zero coordinates out of $N$. Every choice corresponds to a linear subspace of dimension at most $k$ that is spanned solely by the columns associated with non-zero entries in $x$. Using a union bound over all these choices, \Cref{thm:SE} applies with target dimension
$t\in O(\eps^{-2}k\log({N}/{(\delta k)}))$ 
and yields $(1\pm\eps)$ distortion for the Euclidean norm of all vectors in the collection of $k$-sparse subspaces.

\section{\Frechet distance between curves}\label{sec:frechet}
\subsection{Dimension reduction}\label{sec:dimredfrechet}
In this section, we will prove the following proposition, which can be seen as the restriction of \Cref{thm:dimredsurfaces} to the continuous \Frechet distance. This result is already known~\cite{PsarrosR23,PsarrosR25} as discussed in \Cref{sec:relatedwork}. However, our approach is different, in particular considerably simpler and shorter and it serves as a warm-up for proving \Cref{thm:dimredsurfaces} in full generality.

\begin{proposition}
    Let $\sigma_1,\ldots,\sigma_n:[0,1]\rightarrow\R^d$ be $n$ piecewise linear curves with $m$ vertices. There exists a random linear mapping $f:\R^d\rightarrow\R^t$ with $t \in O(\varepsilon^{-2}\log(mn/\delta))$ such that, with probability at least $1-\delta$ it holds for all $1\leq i<j\leq n$ that
    \[
        (1-\varepsilon)\,d_{\cF}(\sigma_i,\sigma_j)\leq d_{\cF}(f(\sigma_i),f(\sigma_j))\leq (1+\varepsilon)\,d_{\cF}(\sigma_i,\sigma_j), 
    \]
    where $d_{\cF}(\sigma_i,\sigma_j) = \inf_{(\alpha,\beta)}\max_{r\in[0,1]}\|\sigma_i(\alpha(r))-\sigma_j(\beta(r))\|_2$ denotes the continuous \Frechet distance between curves.
\end{proposition}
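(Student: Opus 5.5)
We take $f(x)=Sx$, where $S\in\R^{t\times d}$ is the random matrix of \Cref{thm:kSE} with $k=4$, applied to the matrix $A\in\R^{d\times N}$ whose columns are all vertices of $\sigma_1,\ldots,\sigma_n$, so $N\le nm$. With $k=4$ the target dimension is $t\in O(\eps^{-2}\cdot 4\log(nm/(4\delta)))\subseteq O(\eps^{-2}\log(nm/\delta))$. With probability at least $1-\delta$ we then have $(1-\eps)\|Ax\|_2\le\|SAx\|_2\le(1+\eps)\|Ax\|_2$ for every $x\in\Psi_4$. We condition on this event; everything below is deterministic.

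The first step is to show that every relevant difference vector is of the form $Ax$ with $x\in\Psi_4$. Fix $i<j$ and parameters $a,b\in[0,1]$. The point $\sigma_i(a)$ lies on some segment between consecutive vertices $u_\ell,u_{\ell+1}$, so $\sigma_i(a)=(1-\lambda)u_\ell+\lambda u_{\ell+1}$. Similarly $\sigma_j(b)=(1-\lambda')v_{\ell'}+\lambda' v_{\ell'+1}$. Hence $\sigma_i(a)-\sigma_j(b)=Ax$ for a coefficient vector $x$ with at most four nonzero entries. If vertices coincide, the coefficients merge and sparsity only drops.

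Because $f$ is linear, it commutes with convex combinations. Therefore $f(\sigma_i)$ is the polygonal curve with vertices $Su_\ell$ and the same parameterization, and $f(\sigma_i)(a)=S\sigma_i(a)$. It follows that $\|f(\sigma_i)(a)-f(\sigma_j)(b)\|_2=\|SAx\|_2$, which lies within a factor $1\pm\eps$ of $\|\sigma_i(a)-\sigma_j(b)\|_2$, simultaneously for all $a,b$ and all pairs $i<j$.

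The second step is the exchange argument. Write $F(\alpha,\beta)=\max_r\|\sigma_i(\alpha(r))-\sigma_j(\beta(r))\|_2$ and $F'(\alpha,\beta)$ for the same quantity on the projected curves. Pointwise in $r$ we have $(1-\eps)\|\cdot\|\le\|\cdot\|'\le(1+\eps)\|\cdot\|$, and taking the maximum over $r$ preserves this, so $(1-\eps)F(\alpha,\beta)\le F'(\alpha,\beta)\le(1+\eps)F(\alpha,\beta)$ for every admissible pair $(\alpha,\beta)$. The admissible set (continuous non-decreasing surjections) depends only on the parameter domain, so it is the same for the original and the projected curves.

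For the upper bound, take a pair $(\alpha^*,\beta^*)$ attaining $d_{\cF}(\sigma_i,\sigma_j)$, or a minimizing sequence if the infimum is not attained. Then $d_{\cF}(f\sigma_i,f\sigma_j)\le F'(\alpha^*,\beta^*)\le(1+\eps)d_{\cF}(\sigma_i,\sigma_j)$. For the lower bound, take an optimal or near-optimal pair for the projected curves. Then $d_{\cF}(f\sigma_i,f\sigma_j)\ge F'(\alpha,\beta)-\eta\ge(1-\eps)F(\alpha,\beta)-\eta\ge(1-\eps)d_{\cF}(\sigma_i,\sigma_j)-\eta$, and letting $\eta\to0$ gives the claim.

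The main point needing care is the first step: the image of a polygonal curve under $f$ must be exactly the polygonal curve on the projected vertices with the same parameterization, so that the traversal sets coincide. This is what allows a single sparse subspace embedding to cover the uncountably many point pairs at once.
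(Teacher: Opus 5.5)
Your proposal is correct and follows the paper's proof essentially step for step: a $4$-sparse subspace embedding (\Cref{thm:kSE} with $k=4$, $N=nm$) on the matrix of all vertices, followed by the two-sided exchange argument between optimal traversals of the original and the projected curves. Two small additions go beyond the paper's write-up: your minimizing-sequence and slack-$\eta$ argument covers the case where the infimum is not attained, which the paper assumes away, and your remark that $f(\sigma_i)$ is the polygonal curve on the projected vertices with the same parameterization makes explicit a step the paper uses implicitly.
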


\begin{proof}
    The proof consists of two parts: first, we show that there exists a random linear mapping $S:\R^d\rightarrow\R^t$ with $t\in O(\varepsilon^{-2}\log(mn/\delta))$ such that 
    \[
        (1-\varepsilon)\|p-q\|_2\leq \|S(p)-S(q)\|_2\leq(1+\varepsilon)\|p-q\|_2 
    \]
    for every pair of points $p\in\sigma_i,q\in\sigma_j$ on every pair of curves $\sigma_i,\sigma_j$. We emphasize that $p$ and $q$ do not have to be vertices of $\sigma_i$ and $\sigma_j$. Then, we show that this leads to the desired approximation for the continuous \Frechet distance between $\sigma_i$ and $\sigma_j$.
    
    Consider any pair of curves $\sigma_i,\sigma_j$ and any pair of points $p\in \sigma_i$ and $q \in \sigma_j$. Denote the vertices of $\sigma_i$ and $\sigma_j$ by $p_1,\ldots,p_m$ and $q_1,\ldots,q_m$, respectively.
    Observe that $p$ and $q$ can be expressed as convex combinations of their neighboring vertices, i.e., there exist $\lambda_p,\lambda_q\in [0,1]$ and indices $h,\ell\in[m-1]$ such that 
    $p=(1-\lambda_p) p_h + \lambda_p p_{h+1}$ and $q=(1-\lambda_q) q_\ell + \lambda_q q_{\ell+1}$. Then, $
    p-q = (1-\lambda_p) p_h + \lambda_p p_{h+1} - (1-\lambda_q) q_\ell - \lambda_q q_{\ell+1}$
    is a linear combination of $4$ vertices. 
    
    Hence, to preserve $\|p-q\|_2$ up to a factor $(1\pm\varepsilon)$ for all $p\in\sigma_i,q\in\sigma_j$, it suffices to preserve the norm of all linear combinations of $4$ vertices of $\sigma_i$ and $\sigma_j$. Let $A\in\R^{d\times mn}$ be the matrix whose columns comprise all vertices of all curves $\sigma_1,\ldots,\sigma_n$. Then, any linear combination of $4$ vertices of $\sigma_i$ and $\sigma_j$ can be written as $Ax$ for some $4$-sparse vector $x\in\R^{mn}$. By setting $k=4$ and applying \Cref{thm:kSE} with $N=nm$, there exists a random linear mapping $S:\R^{d}\rightarrow\R^t$ with $t\in O(\varepsilon^{-2}\log(mn/\delta))$ such that 
        $(1-\eps) \|Ax\|_2 \leq \|SAx\|_2 \leq (1+\eps) \|Ax\|_2$ 
    holds for all $4$-sparse $x$. It follows for every pair of points $p\in\sigma_i,q\in\sigma_j$ that 
    \[
        (1-\varepsilon)\|p-q\|_2\leq \|S(p)-S(q)\|_2\leq(1+\varepsilon)\|p-q\|_2 
    \]
    holds for every pair of curves $\sigma_i,\sigma_j$, which is what we wanted to show.
    
    Now, to show that $S$ preserves the continuous \Frechet distance between any pair of curves $\sigma_i$ and $\sigma_j$, let $(\alpha^*,\beta^*)$ and $(\alpha^S,\beta^S)$ be reparameterizations realizing $d_{\cF}(\sigma_i,\sigma_j)$ and $d_{\cF}(S(\sigma_i),S(\sigma_j))$ respectively. Let $r'\in\argmax_{r\in[0,1]}\|S(\sigma_i(\alpha^*(r)))-S(\sigma_j(\beta^*(r)))\|_2$. Then,
    \begin{align*}
        d_{\cF}(S(\sigma_i),S(\sigma_j))&=\inf_{(\alpha,\beta)}\max_{r\in[0,1]}\|S(\sigma_i(\alpha(r)))-S(\sigma_j(\beta(r)))\|_2 \\
        &\le \max_{r\in[0,1]}\|S(\sigma_i(\alpha^*(r)))-S(\sigma_j(\beta^*(r)))\|_2 \\
        &= \|S(\sigma_i(\alpha^*(r')))-S(\sigma_j(\beta^*(r')))\|_2 \\
        &\le (1+\varepsilon) \,\|\sigma_i(\alpha^*(r'))-\sigma_j(\beta^*(r'))\|_2 \\
        &\le (1+\varepsilon)\max_{r\in[0,1]}\|\sigma_i(\alpha^*(r))-\sigma_j(\beta^*(r))\|_2 \\
        &=(1+\varepsilon)\inf_{(\alpha,\beta)}\max_{r\in[0,1]}\|\sigma_i(\alpha(r))-\sigma_j(\beta(r))\|_2 \\
        &=(1+\varepsilon)\,d_{\cF}(\sigma_i,\sigma_j)\,,
    \end{align*}
    which proves the dilation bound.
    Next, let $r''\in\argmax_{r\in[0,1]}\|\sigma_i(\alpha^S(r))-\sigma_j(\beta^S(r))\|_2$. Then it holds that
    \begin{align*}
        d_{\cF}(S(\sigma_i),S(\sigma_j))&=\inf_{(\alpha,\beta)}\max_{r\in[0,1]}\|S(\sigma_i(\alpha(r)))-S(\sigma_j(\beta(r)))\|_2\\
        &= \max_{r\in[0,1]}\|S(\sigma_i(\alpha^S(r)))-S(\sigma_j(\beta^S(r)))\|_2 \\
        &\ge \|S(\sigma_i(\alpha^S(r'')))-S(\sigma_j(\beta^S(r'')))\|_2 \\
        &\ge (1-\varepsilon)\,\|\sigma_i(\alpha^S(r''))-\sigma_j(\beta^S(r''))\|_2 \\
        &=(1-\varepsilon)\max_{r\in[0,1]}\|\sigma_i(\alpha^S(r))-\sigma_j(\beta^S(r))\|_2 \\
        &\ge (1-\varepsilon)\inf_{(\alpha,\beta)}\max_{r\in[0,1]}\|\sigma_i(\alpha(r))-\sigma_j(\beta(r))\|_2 \\
        &=(1-\varepsilon)\,d_{\cF}(\sigma_i,\sigma_j)\,,
    \end{align*}
    which proves the contraction bound. This finishes the proof.
\end{proof}

\section{Distance measures between piecewise linear surfaces}\label{sec:surfaces}
In this section, we first extend our framework for dimension reduction to distances between surfaces. Then, we move to discrete distance measures by giving a consistent definition of discrete traversals for surfaces.

\subsection{Dimension reduction}\label{sec:dimredsurfaces}
In this subsection, we prove \Cref{thm:dimredsurfaces}.
The outline of our proof is very similar to the special case of the \Frechet distance treated in \Cref{sec:frechet}. However, to achieve the target dimension bound of $O(\eps^{-2}(\gamma+\log(nm/\delta)))$, it is not sufficient to apply the sparse subspace embedding given in \Cref{thm:kSE}. Instead, we exploit the geometric structure of the problem to avoid considering the same $\gamma$-dimensional subspaces multiple times and other overcounting issues that did not matter in the case of polygonal curves, where $\gamma\in O(1)$. Moreover, standard reparameterizations are insufficient to handle the case of surfaces. We thus work with homeomorphisms, which require to handle the infimum via sequences of homeomorphisms, because, unlike reparameterizations, there are cases where the infimum is not attained.

\dimredsurfaces*
\begin{proof}
    The proof again consists of two parts: first, we show that there exists a random linear mapping $S:\R^d\rightarrow\R^t$ with $t\in O(\eps^{-2}(\gamma+\log(nm/\delta)))$ such that 
    \[
        (1-\varepsilon)\|p-q\|_2\leq \|S(p)-S(q)\|_2\leq(1+\varepsilon)\|p-q\|_2 
    \]
    for every pair of points $p\in\sigma_i,q\in\sigma_j$ on every pair of distinct piecewise linear surfaces $\sigma_i,\sigma_j$. Note that $p$ and $q$ do not have to be vertices of $\sigma_i$ and $\sigma_j$. Then, we show that this leads to the desired approximation for the generalized distance measure between $\sigma_i$ and $\sigma_j$.
    
    Consider any pair of piecewise linear surfaces $\sigma_i,\sigma_j$ and any pair of points $p\in \sigma_i$ and $q \in \sigma_j$. 
    Observe that $p$ lies in some linear piece $P$ of $\sigma_i$ that may be non-convex or contain holes. Consider the convex hull of $P$ and a triangulation thereof into simplices whose vertices are subsets of $\gamma+1$ vertices of $\sigma_i$.

    Now, $p$ lies in one of these simplices and by Carath\'eodory's theorem, $p$ can be expressed as a convex combination of the vertices of this simplex. This means that there exist $\gamma+1$ vertices $v_1,\ldots,v_{\gamma+1}$ of $\sigma_i$ and coefficients $\lambda \in \Delta_{\gamma+1}$ such that $p=\sum_{i=1}^{\gamma+1} \lambda_i v_i$, where $\Delta_{\gamma+1} \coloneqq \{\lambda\in \R^{\gamma+1} \mid \sum_{i=1}^{\gamma+1}\lambda_i = 1, \forall i\in[\gamma+1]\colon \lambda_i \geq 0\}$ is the probability simplex in $\gamma+1$ dimensions. The same argumentation holds for $q$ and thus we have $\lambda, \eta \in \Delta_{\gamma+1}$, and vertices $v_1,\ldots,v_{\gamma+1}$ of $\sigma_i$ and vertices $w_1,\ldots,w_{\gamma+1}$ of $\sigma_j$ such that
    \[
        \|p-q\|_2 = \left\|\sum_{i=1}^{\gamma+1} \lambda_i v_i - \sum_{i=1}^{\gamma+1} \eta_i w_i\right\|_2 \,.
    \]
    Hence, the difference vector can be expressed as a linear combination of $2(\gamma+1)$ vertices.

    At this point, we could again apply the sparse subspace embedding, given in \Cref{thm:kSE}, to the matrix $A\in \R^{d\times N}$ carrying the collection of all $N\geq nm(\gamma+1)$ vertices as its columns, where the sparsity bound parameter is $k=2(\gamma+1)$. This would result in an embedding dimension $t\in O(\eps^{-2}(\gamma\log(N/(\delta \gamma))))$. In particular, $N$, the number of vertices could be much larger than $nm(\gamma+1)$, and additionally we wish to reduce the factor $\gamma$ outside the logarithm to an additive $\gamma$. Our ultimate goal is to reduce from $d$ to $t\in O(\eps^{-2}(\gamma+\log(nm/\delta)))$ dimensions.
    
    To this end, we need to eliminate some overcounting in our previous argument. Observe that every linear piece lies in an affine subspace which requires a basis of only $\gamma+1$ columns to be preserved. In particular \emph{any} non-degenerate set of vertices of one single simplex will suffice as a representative. If we consider any other simplex within the same piece, it will span the same subspace. The only change is that instead of convex combinations $\lambda,\eta\in \Delta_{\gamma+1}$ of these vertices, we need to consider linear combinations $\lambda,\eta \in \R^{\gamma+1}$ to express $p$ resp. $q$. The resulting distance vector $p-q$ is again a linear combination of at most $k=2(\gamma+1)$ vertices, half of which are located on a linear piece on either surface $\sigma_i$ resp. $\sigma_j$.

    By our refined argument, the number of $k$-dimensional linear subspaces we need to consider is thus bounded by $\binom{n}{2}$ pairs of surfaces, times $m^2$ pairings of linear pieces. Thus, we can apply \Cref{thm:SE} with $k=2(\gamma+1)$ and union bound over $M\in O(n^2 m^2)$ choices by adjusting the failure probability to $\delta'=\delta/M$. Then, we obtain an embedding dimension of $t\in O(\eps^{-2}(k + \log(M/\delta)))\subseteq O(\eps^{-2}(\gamma + \log(nm/\delta)))$. 
    
    Let $A\in\R^{d\times N}$ be the matrix whose $N$ columns comprise all vertices of all piecewise linear surfaces $\sigma_1,\ldots,\sigma_n$. The above argumentation yields the existence of a suitable collection of $2(\gamma+1)$-sparse vectors $\Xi_{2(\gamma+1)}\subset \Psi_{2(\gamma+1)}$ and a linear mapping $S:\R^{d}\rightarrow\R^t$ with $t\in O(\eps^{-2}(\gamma + \log(nm/\delta)))$ such that $\forall x\in\Xi_{2(\gamma+1)}\colon (1-\eps) \|Ax\|_2 \leq \|SAx\|_2 \leq (1+\eps) \|Ax\|_2 $.
    This finally implies that
    \[
        (1-\varepsilon)\|p-q\|_2\leq \|S(p)-S(q)\|_2\leq(1+\varepsilon)\|p-q\|_2 
    \]
    holds for every pair of points $p\in\sigma_i,q\in\sigma_j$ on every pair of piecewise linear surfaces $\sigma_i,\sigma_j$, for all $1\leq i<j\leq n$, which is what we wanted to show.

    Now, to show that $S$ preserves the generalized distance measure between any given pair of piecewise linear surfaces $\sigma_i$ and $\sigma_j$, let $\{(\alpha_l^*,\beta_l^*)\}_{l\in\mathbb{N}}$ and $\{(\alpha_l^S,\beta_l^S)\}_{l\in\mathbb{N}}$ be sequences of homeomorphism pairs whose limits determine the values of $d(\sigma_i,\sigma_j)$ and $d(S(\sigma_i),S(\sigma_j))$ respectively. Then,
    \begin{align*}
        d(S(\sigma_i),S(\sigma_j)) &=\inf_{(\alpha,\beta)\in\mcT}\left(\int_{[0,1]^\gamma}\|S(\sigma_i(\alpha(r)))-S(\sigma_j(\beta(r)))\|^q_2 \,d\mu(r)\right)^{1/q} \\
        &\le \underset{l\rightarrow \infty}{\lim} \left(\int_{[0,1]^\gamma}\|S(\sigma_i(\alpha_l^*(r)))-S(\sigma_j(\beta_l^*(r)))\|^q_2 \,d\mu(r)\right)^{1/q} \\
        &\le \underset{l\rightarrow \infty}{\lim} \left(\int_{[0,1]^\gamma}(1+\eps)^q\,\|\sigma_i(\alpha_l^*(r))-\sigma_j(\beta_l^*(r))\|^q_2 \,d\mu(r)\right)^{1/q} \\
        &= (1+\eps)\,\inf_{(\alpha,\beta)\in\mcT}\left(\int_{[0,1]^\gamma}\|\sigma_i(\alpha(r))-\sigma_j(\beta(r))\|^q_2 \,d\mu(r)\right)^{1/q} \\
        &= (1+\eps)\;d(\sigma_i,\sigma_j),
    \end{align*}
    which proves the dilation bound. Similarly, 
    \begin{align*}
        d(S(\sigma_i),S(\sigma_j))
        &=\inf_{(\alpha,\beta)\in\mcT}\left(\int_{[0,1]^\gamma}\|S(\sigma_i(\alpha(r)))-S(\sigma_j(\beta(r)))\|^q_2 \,d\mu(r)\right)^{1/q} \\
        &= \underset{l\rightarrow \infty}{\lim} \left(\int_{[0,1]^\gamma}\|S(\sigma_i(\alpha_l^S(r)))-S(\sigma_j(\beta_l^S(r)))\|^q_2 \,d\mu(r)\right)^{1/q} \\
        &\ge \underset{l\rightarrow \infty}{\lim} \left(\int_{[0,1]^\gamma}(1-\eps)^q\,\|\sigma_i(\alpha_l^S(r))-\sigma_j(\beta_l^S(r))\|^q_2 \,d\mu(r)\right)^{1/q} \\
        &\geq (1-\eps)\,\inf_{(\alpha,\beta)\in\mcT}\left(\int_{[0,1]^\gamma}\|\sigma_i(\alpha(r))-\sigma_j(\beta(r))\|^q_2 \,d\mu(r)\right)^{1/q} \\
        &= (1-\eps)\;d(\sigma_i,\sigma_j),
    \end{align*}
    which proves the contraction bound. This finishes the proof.
\end{proof}

\subsection{Discrete traversals for piecewise linear surfaces}\label{sec:dFsurfaces}

In this subsection, we provide a notion of discrete traversals for surfaces and show that this definition is consistent with discrete traversals for curves. It thus suits the requirements for generalizing the discrete \Frechet distance, and variants of discrete DTW to surfaces.

In contrast with curves, for which there are definitions and algorithms for both the continuous and the discrete \Frechet distance, for surfaces only a continuous \Frechet distance has been defined in the literature~\cite{AltB10}. Given two piecewise linear surfaces $f,g:\R^\gamma\rightarrow \R^d$, the continuous \Frechet distance between $f$ and $g$ is defined as 
    \[  
        \inf_{h:\R^\gamma\rightarrow\R^\gamma}  \sup_{x\in\R^\gamma}  \|f(x)-g(h(x))\|_2, 
    \]
where the infimum is taken over all orientable homeomorphisms $h$ from $\R^\gamma$ to itself. This is consistent with the previous definition using reparameterizations by setting $h$ such that  $h \circ \alpha = \beta$. For simplicity of notation, we will treat $h\colon \R^d\rightarrow\R^d$ as a mapping between the ambient spaces of $f$ and $g$ instead of their parameter spaces.

Recall that in the definition of discrete \Frechet and DTW distances for curves, a minimum is taken over the set of discrete traversals. Hence, a definition for discrete \Frechet and DTW distance between surfaces could be obtained by defining a suitable notion of discrete traversals between surfaces. However, unlike curves, the vertices on a piecewise linear surface do not have a natural ordering that can be used to define a discrete traversal as a sequence. Instead, we define discrete traversals starting from an orientable homeomorphism.

We assume that each surface comes with a fixed `starting' point, say $u^*$ for $f$ and $v^*$ for $g$ and restrict ourselves to homeomorphisms $h$ satisfying $h(u^*)=v^*$. Let $u_1,\ldots,u_{m'}\in\R^d$ and $v_1,\ldots,v_{m''}\in\R^d$ be the sets of vertices of the surfaces $f$ and $g$, respectively. 
A discrete traversal $T_h$ of the surfaces is a set that consists of pairs of vertices, induced by $h$ in the following way. For each vertex $u_i$ of $f$, consider the Voronoi cell $V(u_i)$ of $u_i$ within $f$. Similarly, consider the Voronoi cells $V(v_j)$ for the vertices $v_j$ of $g$. We add the pair $(u_i,v_j)$ to the traversal $T_h$ (without duplicates) if and only if $h(u_i)\in V(v_j)$ or $h^{-1}(v_j)\in V(u_i)$, where ties lying on common boundaries of two or more distinct Voronoi cells are broken in lexicographic order.
We prove that a traversal constructed in this way is consistent with the standard definition of a discrete traversal in the special case of polygonal curves.

\begin{proposition}\label{prop:discretetraversal}
    Let $f,g:[0,1]\rightarrow\R^d$ be polygonal curves, let $h$ be an orientable homeomorphism between $f$ and $g$ such that $h(f(0))=g(0)$ and let $T_h$ be the discrete traversal induced by $h$ as defined above. Then, $T_h$ is a discrete traversal between curves $f$ and $g$. {Additionally, for every discrete traversal $T$, there exists an orientable homeomorphism $h$ such that $T=T_h$.}
\end{proposition}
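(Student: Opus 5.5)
Both directions work in a one-dimensional picture. For polygonal curves, an orientable homeomorphism $h$ between $f$ and $g$ with $h(f(0))=g(0)$ is an orientation-preserving bijection of the traces. It therefore induces a continuous, strictly increasing bijection $\phi:[0,1]\to[0,1]$ of the parameter spaces with $h(f(s))=g(\phi(s))$.

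I will read the Voronoi cells as cells along the curve, i.e.\ with respect to arc length within the curve. With ties broken lexicographically, the cell $V(u_i)$ then corresponds to a parameter interval $I_i$ around $s_i$. These intervals partition $[0,1]$ into consecutive pieces $I_1<I_2<\dots<I_{m'}$, and likewise $J_1<\dots<J_{m''}$ for $g$. The rule defining $T_h$ becomes: $(u_i,v_j)\in T_h$ if and only if $\phi(s_i)\in J_j$ or $\phi^{-1}(t_j)\in I_i$.

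\emph{First part.} I check the three criteria in turn.
\begin{enumerate}
\item Since $\phi(0)=0\in J_1$, we get $(u_1,v_1)\in T_h$. Since $\phi(1)=1\in J_{m''}$, we get $(u_{m'},v_{m''})\in T_h$.
\item Each vertex $u_i$ is paired with exactly one $v_{j(i)}$ via $\phi(s_i)\in J_{j(i)}$, and $j(\cdot)$ is non-decreasing because $\phi$ is monotone. Symmetrically, each $v_j$ is paired with exactly one $u_{i(j)}$, and $i(\cdot)$ is non-decreasing. So $T_h$ is the union of two monotone ``staircase graphs''. Given $(u_i,v_j)\in T_h$ with $i<m'$, $j<m''$, I compare $\phi(s_{i+1})$ with the boundary between $J_j$ and $J_{j+1}$. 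If $\phi(s_{i+1})\in J_j$, then $(u_{i+1},v_j)\in T_h$. If it lies beyond $J_j$, then $t_{j+1}$ lies in $J_{j+1}$, which starts after $J_j$. Hence $\phi^{-1}(t_{j+1})$ lies in $I_i$ or $I_{i+1}$, and this gives $(u_i,v_{j+1})$ or $(u_{i+1},v_{j+1})$.
\item Suppose both $(u_i,v_{j+1})$ and $(u_{i+1},v_j)$ lie in $T_h$. I would do a short case analysis over which of the two defining conditions produced each pair. Each case should contradict the strict monotonicity of $\phi$ together with the ordering of the $I$'s and $J$'s. For example, $\phi(s_i)\in J_{j+1}$ and $\phi(s_{i+1})\in J_j$ immediately contradict $\phi(s_i)<\phi(s_{i+1})$. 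The mixed cases use that $t_j\in J_j$ and $s_i\in I_i$ lie on the correct sides.
\end{enumerate}

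\emph{Second part.} Here I am given a discrete traversal $T$. By the equivalence in \Cref{app:traversal_equiv}, $T$ is a monotone lattice path from $(1,1)$ to $(m',m'')$ with unit steps right, up or diagonal. I construct $\phi$ piecewise linearly, treating maximal runs separately:
\begin{itemize}
\item on a horizontal run $(u_i,v_j),\dots,(u_{i+a},v_j)$, I place $\phi(s_i),\dots,\phi(s_{i+a})$ as distinct points inside $J_j$, which is possible since $J_j$ is a nondegenerate interval;
\item on vertical runs I do the symmetric thing with $\phi^{-1}$;
\item on diagonal steps I map $s_i\mapsto t_j$;
\end{itemize}
and then interpolate linearly. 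I must then verify that no extra pairs appear in $T_h$, for instance that $\phi^{-1}(t_{j'})$ for a $v_{j'}$ on a horizontal run lands in the correct $I$.

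The main obstacle is making the boundary behaviour work. This covers the lexicographic tie-breaking in the first part and the careful placement of points inside cells in the second, so that $T_h=T$ exactly rather than a superset. If this proves delicate, I would first try to choose $\phi$ so that the image of a vertex always lies strictly inside a Voronoi cell and never on its boundary, which avoids ties entirely.
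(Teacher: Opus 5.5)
\textbf{Part one.} Your argument that $T_h$ is a traversal follows essentially the same route as the paper: both use only the monotonicity of $h$ along the curves and the fact that the cells are consecutive intervals. Your explicit reading of the cells as intervals $I_i$, $J_j$ is actually needed. For Euclidean Voronoi cells restricted to a curve, which can be disconnected, item~3 can fail.

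Items~1 and~3 are fine. In item~2, however, the split ``$\phi(s_{i+1})\in J_j$ or beyond $J_j$'' is too coarse. Suppose $\phi(s_{i+1})\in J_{j+1}$ but $\phi(s_{i+1})<t_{j+1}$. Then $\phi^{-1}(t_{j+1})>s_{i+1}$ can lie in $I_{i+2}$, so your ``hence'' is false there, although $(u_{i+1},v_{j+1})\in T_h$ holds directly. Split three ways, as the paper does: $\phi(s_{i+1})$ lies in $J_j$, in $J_{j+1}$, or later. In the last case $\phi^{-1}(t_{j+1})<s_{i+1}$. You then still need $\phi^{-1}(t_{j+1})\ge\min I_i$, which holds whichever of the two defining conditions put $(u_i,v_j)$ into $T_h$.

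\textbf{Part two.} Here the proposal genuinely fails, and no care with boundaries or tie-breaking can rescue it, because the claim as stated is false. Your own observation in item~2 already shows this.
\begin{itemize}
\item Every vertex contributes exactly one pair to $T_h$.
\item Since $\phi(0)=0$ and $\phi(1)=1$, the end vertices $u_1,v_1$ contribute only $(u_1,v_1)$, and $u_{m'},v_{m''}$ contribute only $(u_{m'},v_{m''})$.
\item Hence $|T_h|\le m'+m''-2$ for every $h$, whereas a lattice path without diagonal steps has $m'+m''-1$ pairs.
\end{itemize}
The smallest instance is two single segments with distinct endpoints ($m'=m''=2$). Every $h$ gives $T_h=\{(u_1,v_1),(u_2,v_2)\}$, yet $\{(u_1,v_1),(u_1,v_2),(u_2,v_2)\}$ satisfies items~1--3.

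In your construction the failure appears at the corner of this path:
\begin{itemize}
\item the vertical run $(u_1,v_1),(u_1,v_2)$ demands $\phi^{-1}(t_2)=\phi^{-1}(1)\in I_1$;
\item the horizontal run $(u_1,v_2),(u_2,v_2)$ demands $\phi(s_1)=\phi(0)\in J_2$.
\end{itemize}
Both are impossible. More generally, a corner pair may have to be produced by a vertex that is pinned to an endpoint or already committed to another pair. So the danger is not that $T_h$ is a superset of $T$, but that it is too small. The paper's construction has the same defect: it fixes $h(u_{m'})=v_{m''}$ and then, in this example, requires $h^{-1}(v_2)\in V(u_1)$.

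What you can prove instead is that every discrete traversal $T$ \emph{contains} some $T_h$. Together with part one, this is all that discrete Fr\'echet and $q$-DTW need, since their costs only grow when pairs are added. To get it, take a strictly increasing $\phi$ whose graph runs inside $\bigcup_{(i,j)\in T}\overline{I_i}\times\overline{J_j}$, passing through the shared corner at diagonal steps and avoiding cell boundaries at the vertex parameters. Then the pair produced by each vertex is a cell met by the graph, hence lies in $T$.
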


\begin{proof}
    \textbf{Consistency } 
    Let $f(s_1)=f(0)=u_1,\ldots,f(s_{m'})=f(1)=u_{m'}$ and $g(t_1)=g(0)=v_1,\ldots,g(t_{m''})=g(1)=v_{m''}$ be the vertices of $f$ and $g$, respectively. We need to show that 
    
    \begin{enumerate}
        \item $(u_1,v_1)\in T_h \wedge (u_{m'},v_{m''})\in T_h$,
        \item $\forall i<m',j<m''\colon (u_i,v_j)\in T_h\Rightarrow (u_i,v_{j+1})\in T_h\vee(u_{i+1},v_j)\in T_h\vee (u_{i+1},v_{j+1})\in T_h$,
        \item $(u_i,v_{j+1})\in T_h \Rightarrow (u_{i+1},v_j)\notin T_h$.
    \end{enumerate}
       
    For the first property, we have by assumption, $h(u_1)=v_1$, so in particular, $h(u_1)\in V(v_1)$, which means that $(u_1,v_1)\in T_h$ holds by definition. Now, because $h$ is a homeomorphism, which maps the boundary $\partial f$ of $f$ bijectively to the boundary $\partial g$ of $g$, and since $\partial f=\{u_1,u_{m'} \}$ and $\partial g=\{ v_1,v_{m''} \}$, it follows that $h(u_{m'})=v_{m''}$, so $(u_{m'},v_{m''})\in T_h$.
   
    To prove the second property, assume that $(u_i,v_j)\in T_h$. This means that $h(u_i)\in V(v_j)$ or $h^{-1}(v_j)\in V(u_i)$. Without loss of generality, assume that $h(u_i)\in V(v_j)$. Suppose, for the sake of a contradiction, that $(u_i,v_{j+1})\not\in T_h$ and $(u_{i+1},v_j)\not\in T_h$ and $(u_{i+1},v_{j+1})\not\in T_h$. This means that $h(u_{i+1})\not\in V(v_{j})\cup V(v_{j+1})$ and $h^{-1}(v_{j+1})\not\in V(u_i)\cup V(u_{i+1})$. From $h(u_i)\in V(v_j)$ and the fact that $h$ is a homeomorphism, it follows that $h(u_{i+1})\in V(v_k)$ for some $k\geq j+2$. This implies that $h^{-1}(v_\ell)\in V(u_i)\cup V(u_{i+1})$ for all $\ell\in\{j+1,\ldots,k-1\}$. In particular, $h^{-1}(v_{j+1})\in V(u_i)\cup V(u_{i+1})$, which contradicts the statement before.

    For the third property, let $(u_i,v_{j+1})\in T_h$. This implies that either $h(u_i)\in V(v_{j+1})$ or $h^{-1}(v_{j+1})\in V(u_i)$. Then there exist $r,s\in [0,1]$ parameterizing $p=f(r), q=g(s)$ such that $h(p)=q$, and moreover $r\leq \bar r \in [0,1]$, $s\geq\bar s \in [0,1]$, where $\bar r$ parameterizes the boundary between $V(u_i)$ and $V(u_{i+1})$, and $\bar s$ parameterizes the boundary between $V(v_j)$ and $V(v_{j+1})$, respectively. Assume for the sake of contradiction that $(u_{i+1},v_j)\in T_h$. Then we have analogously $h(p')=q'$ with $p'=f(r'), q'=g(s')$ parameterized by $r' \geq \bar r$, and  $s' \leq \bar s$. This, however, violates the continuity of either $h$ or $h^{-1}$, or either $p=p'$ or $q=q'$ hold (if they lie exactly on the boundary). Any case contradicts the fact that $h$ is a homeomorphism.

    \subparagraph{Existence of a consistent homeomorphism}
    Let us now turn to showing that every traversal $T$ can be realized as $T=T_h$ for some homeomorphism $h$ through our transformation. Consider a fixed traversal $T$ and for each vertex $w$ of either of the two curves, let $\Gamma(w)=\{p\mid (w,p)\in T \vee (p,w)\in T \}$ be its neighborhood in $T$. The idea of our construction is to map the entire neighborhood of a vertex $w$ to its Voronoi cell $V(w)$ in the order induced by $T$. First, let $h(u_1)=v_1$, and $h(u_{m'})=v_{m''}$. Note that since there are no crossing edges by Item 3 above, only one of $\Gamma(u_1), \Gamma(v_1)$ can contain more than one element and the two cases are symmetric. W.l.o.g. let $\Gamma(u_1)$ contain $n_1>1$ elements $\{v_1,v_2,\ldots, v_{n_1}\}$, then we identify points $p_2,\ldots,p_{n_1}\in V(u_1)$ that appear after $u_1$ in the order given by the indices and let $h^{-1}(v_{i})=p_i$ for $i\in \{2,\ldots,n_1\}$. In case the next element in $T$ is $(u_2,v_{n_1+1})$, we can `remove' all preceding vertices and start the process over. Otherwise, the next element is $(u_2,v_{n_1})$ and we `remove' preceding vertices. We distinguish two sub-cases: if $|\Gamma(v_{n_1})|\geq 1$, then we map all  vertices to points in $V(v_{n_1})$ after $v_{n_1}$ using $h$ in the same way described above. Otherwise $|\Gamma(u_2)|\geq 1$, and we map $h(u_2)=p$ for a point $p\in V(v_{n_1})$ after $v_{n_1}$. All remaining vertices in $\Gamma(u_2)$ are mapped using $h^{-1}$ to points in $V(u_2)$ after $u_2$ in the same way described above. We repeat this algorithm until we reach the pair $(u_{m'},v_{m''})$.
    Finally, we complete the homeomorphism between the points mapped during this process in a continuous bijective way preserving the ordering of points in the image and pre-image. By construction, the traversal $T_h$ induced by $h$ equals $T$ again.
\end{proof}

\section{Further Considerations}
\subparagraph{Extension to $\ell_p$}
An interesting generalization of our framework is offered by the possibility of using $\ell_p$ normed Banach spaces as underlying vector spaces instead of the Euclidean space. This, however requires changing the i.i.d. Gaussian distribution of the embedding matrix to an i.i.d. $p$-stable distribution, which exists for any $p\in(0,2]$, see~\cite{Indyk2006,Dytso18}. Even more crucial, while in the case of the $\ell_2$ norm, the estimator is again the $\ell_2$ norm, this does not continue to hold for other choices of $p\in(0,2)$. Instead, we get the following guarantee from~\cite{MaiMM0SW23}
\[
    \forall x\in \Psi_k \colon (1-\eps) \|Ax\|_p \leq \|SAx\|_{\mathrm{med}} \leq (1+\eps) \|Ax\|_p\,,
\]
where for target dimension $t$, $\|y\|_{\mathrm{med}}\coloneqq \mathrm{median}\{|y_i|\mid i\in [t]\}$.

Any algorithm that can be implemented to query only the $\ell_p$ distances of point pairs $p,q$ located on different curves (or surfaces) will work unchanged up to $(1\pm\eps)$ errors using median queries. Thus an $\ell_p$ extension can benefit from our framework. However, algorithms that perform more complicated operations in the original dimension may need sophisticated modifications to work with the median estimator in the reduced dimension.

\subparagraph{Algorithmic considerations}
The continuous \Frechet and Hausdorff distances have efficient algorithms, see \cite{AltG95} and descendants thereof. Also discrete variants of \Frechet, DTW, Hausdorff, etc. can be computed efficiently and with linear dependencies on the dimension \cite{EiterM94}. These algorithms directly benefit from our framework by reducing the dimension of the input curves and running the algorithms unchanged on the low dimensional embedded curves.

As we have outlined in the previous paragraph, $\ell_p$ extensions may require changes to the algorithm. For continuous DTW the situation is even worse, since algorithms for these distance are only known in the $O(1)$-dimensional case \cite{BrankovicBKNPW20,BuchinNW22} (where dimension reduction makes no sense). We face similar limitations in the context of dissimilarity measures between surfaces: efficient algorithms are only known for simple polygons in $O(1)$-dimensions~\cite{BuchinBW06}. The \Frechet distance for general surfaces is only known to be semi-computable~\cite{AltB10}. However, for some of these issues, relaxations such as a \emph{weak} \Frechet distance or practical heuristics are available and benefit from dimension reduction while preserving their original performance up to $(1\pm\eps)$ errors.
Even in cases where no algorithms are known, our framework provides dimension reduction guarantees that will continue to hold for future developments. For instance, one intriguing open direction would be to extend the algorithm for the \Frechet distance between simple polygons~\cite{BuchinBW06} (or more general classes of surfaces) to arbitrary $d$-dimensional space. Our framework then applies directly to remove any dependence on $d$.

\section{Conclusion}\label{sec:conclusion}
We have introduced a widely general framework for dimension reduction for polygonal curves and piecewise linear surfaces, and discussed popular special cases as well as further extensions. We hope that our framework and methods will motivate and contribute to future developments and ultimately lead to dimension-reduced or dimension-independent coresets, dimension-free learning theoretical bounds, and finally improved algorithms for important proximity, classification and clustering problems for curves and surfaces.
A first step in exploiting random projections with collections of low-dimensional subspaces has been made recently in \cite{munteanu2026terminal}. The authors designed the first \emph{terminal embeddings} for preserving the \Frechet distance of input curves to arbitrary curves in the ambient space. This also enabled the first dimension-free and near-optimal coresets for $k$-clustering under the \Frechet distance. Extensions to other distance measures and higher dimensional linear objects are intriguing directions for further research.

\bibliography{references}

\clearpage
\appendix
\section{Equivalence of traversal definitions}\label[appendix]{app:traversal_equiv}
We prove that the usual definition of traversals as a sequence of vertex pairs is equivalent to the definition using unordered sets, that is used in our paper.
\begin{lemma}
    Let $\sigma=(u_1,\ldots,u_{m'}),\tau=(v_1,\ldots,v_{m''})$ be polygonal curves. For every sequence of vertex pairs $T=\big((u,v)_\ell\big)_{\ell\in[L]}$, where $u\in\sigma, v\in\tau$ and $L$ is the length of the sequence, consider the corresponding set $S=\{(u,v)_\ell\mid \ell\in[L]\}$. Then $T$ satisfies
    
    \begin{enumerate}[A.]
        \item $T$ starts with $(u_1,v_1)$ and ends with $(u_{m'},v_{m''})$,
        \item $\forall i<m',j<m''\colon$ the successor of $(u_i,v_j)$ in $T$ is exactly one element \\
        $(u_{i'},v_{j'}) \in\{ (u_{i+1},v_j),(u_i,v_{j+1}), (u_{i+1},v_{j+1})\},$
    \end{enumerate}

    \noindent
    if and only if $S$ satisfies
    
    \begin{enumerate}
        \item $(u_1,v_1)\in S \wedge (u_{m'},v_{m''})\in S$,
        \item $\forall i<m',j<m''\colon (u_i,v_j)\in S\Rightarrow (u_i,v_{j+1})\in S\vee(u_{i+1},v_j)\in S\vee (u_{i+1},v_{j+1})\in S$,
        \item $(u_i,v_{j+1})\in S \Rightarrow (u_{i+1},v_j)\notin S$.
    \end{enumerate}
\end{lemma}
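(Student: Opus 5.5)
We prove both directions directly from the definitions, viewing a sequence $T$ satisfying A and B as a monotone lattice path in the index grid $[m']\times[m'']$ that starts at $(1,1)$ and uses the steps $(1,0)$, $(0,1)$ and $(1,1)$. The set $S$ is then the set of grid cells visited by this path. Because each step strictly increases $i+j$, no pair repeats, so passing from $T$ to $S$ loses only the ordering. The task is to show that properties 1--3 characterize exactly the cell sets of such paths, and that the ordering can be recovered from $S$.

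\textbf{($\Rightarrow$)} Property 1 follows at once from A.

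For property 2, take $(u_i,v_j)\in S$ with $i<m'$ and $j<m''$. This pair is not the last element of $T$, since the last element is $(u_{m'},v_{m''})$. By B its successor is one of the three listed pairs, and that successor lies in $S$.

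For property 3, suppose both $(u_i,v_{j+1})$ and $(u_{i+1},v_j)$ lie in $S$. Along the path both indices are non-decreasing. Whichever of the two pairs occurs first in $T$, the other would require one of the indices to decrease afterwards. This is a contradiction.

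\textbf{($\Leftarrow$)} Starting from $S$, we reconstruct $T$. Begin at $(u_1,v_1)\in S$. From the current pair $(u_i,v_j)$, choose a successor in $S$ among the three candidates. If $i<m'$ and $j<m''$, property 2 guarantees that some candidate exists. On the boundary $i=m'$ or $j=m''$, the only possible move is along that boundary, and this case needs a separate argument.

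Property 3 forbids $(u_i,v_{j+1})$ and $(u_{i+1},v_j)$ from both being present. When the diagonal pair is also present alongside one of them, the rule is to prefer the axis step. This keeps the walk visiting all elements of $S$.

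\textbf{Main obstacle.} The harder part is to show that the reconstructed path visits every element of $S$, reaches $(u_{m'},v_{m''})$, and that the choice of successor is forced. As written, 1--3 do not seem to exclude extra "dangling" pairs in $S$ that lie off the path. Nor do they handle steps along the boundary row or column, where property 2 says nothing.

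I would first attempt a counting and ordering argument. Any two elements of $S$ must be comparable in the componentwise order, because an incomparable pair contradicts a generalized form of property 3. I would try to derive that generalized form by induction from property 3 together with property 2. Given comparability, sorting $S$ by $i+j$ yields the sequence $T$.

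If this fails, the statement likely needs an additional implicit assumption, namely that $S$ is connected in the sense of property 2 in both directions (every pair other than the first has a predecessor in $S$). I would make that assumption explicit and prove the equivalence under it.
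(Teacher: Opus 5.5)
Your forward direction matches the paper's: A gives 1, the successor from B gives 2, and monotonicity gives 3. Note that it needs $T$ to be a monotone lattice path throughout, whereas B as written only constrains successors of pairs with $i<m'$, $j<m''$. Literally, for $m'=2$, $m''=3$, the sequence $T=((u_1,v_1),(u_2,v_1),(u_1,v_2),(u_2,v_3))$ satisfies A and B but violates 3. The paper makes the same assumption silently.

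The real gap is the converse, read as you do as the existence of an ordering of $S$. There your suspicion is correct: under the path reading the converse is false, so no argument can close it. Write $(i,j)$ for $(u_i,v_j)$ and take $m'=m''=3$, $S=\{(1,1),(2,1),(3,1),(3,2),(3,3),(1,3)\}$. Property 1 holds. Property 2 only needs checking at $(1,1)$ and $(2,1)$. Property 3 holds since $(2,2)\notin S$ and all other anti-diagonal partners lie outside the grid. Yet $(2,1)$ and $(1,3)$ are incomparable, so no monotone path has cell set $S$. A boundary-gap example is $m'=3$, $m''=2$, $S=\{(1,1),(1,2),(3,2)\}$.

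The paper's ``2 $\wedge$ 3 $\Rightarrow$ B'' step is just the local successor choice you describe and never shows that the walk exhausts $S$, so it has exactly the hole you flagged. Your plan (a) cannot succeed as posed: the induction stalls at $(1,3)$, which lies on the boundary $j=m''$ where 2 provides no successor.

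The minimal repair is to require 2 for every pair in $S$ other than $(u_{m'},v_{m''})$, with only existing indices as candidates. This is exactly what a monotone path gives in the forward direction. With that change, your plan (a) goes through:
among incomparable pairs $a=(i,j)$, $b=(i',j')$ with $i<i'$, $j>j'$, pick one maximizing $i+j+i'+j'$. By maximality the successors $a^+,b^+\in S$ are comparable with $b$ and $a$ respectively. Since $a^+$ has second coordinate at least $j>j'$ and $b^+$ has first coordinate at least $i'>i$, this forces $b\le a^+$ and $a\le b^+$, i.e.\ $i'=i+1$ and $j=j'+1$, contradicting 3. Hence $S$ is a chain from $(1,1)$ to $(m',m'')$. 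Consecutive chain elements satisfy $s_k<s_{k+1}\le s_k^+$, so sorting $S$ yields a legal $T$. Your fallback (b), a predecessor condition, also works by the mirror-image argument (and makes 2 redundant), but extending 2 to the boundary is the smaller change.
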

\begin{proof}
We split the proof into several items:
    \begin{description}
        \item[A. $\Leftrightarrow$ 1.] ~\\
        If $T$ starts with $(u_1,v_1)$ and ends with $(u_{m'},v_{m''})$, then clearly $(u_1,v_1)\in S$ and $(u_{m'},v_{m''})\in S$. Reversely, if $(u_1,v_1)\in S$ and $(u_{m'},v_{m''})\in S$, we can put them to be the first, respectively the last element in $T$.
        \item[B. $\Rightarrow$ 2.] ~\\
        Let $(u_i,v_j)\in S$, then it is also an element of $T$ and has one element $(u_{i'},v_{j'}) \in \{ (u_{i+1},v_j),(u_i,v_{j+1}), (u_{i+1},v_{j+1})\}$ as its direct successor in $T$. Thus $(u_{i'},v_{j'})\in S$. 
        \item[B. $\Rightarrow$ 3.] ~\\
        Let $(u_i,v_{j+1})\in S$. Then it is also in $T$ and all its (not necessarily direct) successors $(u_{i'},v_{j'})$ in $T$ satisfy $i'\geq i$ and $j'\geq j+1$. So $(u_{i+1},v_j)$ can never appear after $(u_i,v_{j+1})$ in $T$. Similarly, all predecessors $(u_{i'},v_{j'})$ in $T$ satisfy $i'\leq i$ and $j'\leq j+1$. So $(u_{i+1},v_j)$ can never appear before $(u_i,v_{j+1})$ in $T$. It follows that $(u_{i+1},v_j)\notin S$.
        \item[2. $\wedge$ 3. $\Rightarrow$ B.] ~\\
        By Item 2 we have that at least one of the admissible successors of $(u_i,v_j)\in S$ (w.r.t. $T$) must be present in $S$. If it is exactly one, then we are done. Otherwise, by Item 3, we have that only one element $(u_{i'},v_{j'}) \in \{(u_i,v_{j+1}), (u_{+1},v_{j})\}$ can be in $S$ in addition to $(u_{i+1},v_{j+1})\in S$. We can thus construct $T$ such that $T=(\ldots,(u_i,v_j),(u_{i'},v_{j'}),(u_{i+1},v_{j+1}),\ldots)$, which obeys the rules of B. 
    \end{description}    
    
    \noindent
    This finishes our proof.
\end{proof}
\end{document}